%

\documentclass[aos,preprint]{imsart}

\RequirePackage[OT1]{fontenc}
\RequirePackage{amsthm,amsmath}
\RequirePackage[numbers]{natbib}
\RequirePackage[colorlinks,citecolor=blue,urlcolor=blue]{hyperref}


\startlocaldefs
\numberwithin{equation}{section}
\theoremstyle{plain}
\newtheorem{thm}{Theorem}[section]

\usepackage{amsfonts}
\usepackage{graphicx}   

\endlocaldefs

\begin{document}

\newcommand{\aalpha}{\boldsymbol\alpha}
\newcommand{\bbeta}{\boldsymbol\beta}
\newcommand{\ggamma}{\boldsymbol\gamma}
\newcommand{\GGamma}{\boldsymbol\Gamma}
\newcommand{\nnabla}{\boldsymbol\nabla}
\newcommand{\R}{\mathbb{R}}
\newcommand{\xx}{\mathbf{x}}
\newcommand{\XX}{\mathbf{X}}
\newcommand{\llog}{\mathrm{log}}
\newcommand{\SSigma}{\boldsymbol\Sigma}
\newcommand{\mmu}{\boldsymbol\mu}
\newcommand{\N}{\mathcal{N}}
\newcommand{\yy}{\mathbf{y}}
\newcommand{\zz}{\mathbf{z}}
\newcommand{\ZZ}{\mathbf{Z}}
\newcommand{\WW}{\mathbf{W}}
\newcommand{\pp}{\mathbf{p}}
\newcommand{\PP}{\mathrm{P}}
\newcommand{\qq}{\mathbf{q}}
\newcommand{\GG}{\mathbf{G}}
\newcommand{\g}{\mathbf{g}}
\newcommand{\HH}{\mathbf{H}}
\newcommand{\A}{\mathbf{A}}
\newcommand{\fee}{\mathrm{FEE}}
\newcommand{\feens}{\mathrm{FEE{-}NS}}
\newcommand{\feees}{\mathrm{FEE{-}ES}}

\begin{frontmatter}

\title{Metropolis-Hastings Sampling Using Multivariate Gaussian Tangents}
\runtitle{MH-MGT Sampling}

\begin{aug}
\author{\fnms{Alireza S.} \snm{Mahani}\thanksref{m1}\corref{}\ead[label=e1]{alireza.mahani@sentrana.com}}
\address{\printead{e1}}
\and
\author{\fnms{Mansour T.A.} \snm{Sharabiani}\thanksref{m2,m3}\ead[label=e2]{???}}
\affiliation{Scientific Computing Group, Sentrana Inc., USA\thanksmark{m1} and Department of Epidemiology and Biostatistics, Imperial College, UK \thanksmark{m2} and Daric Solutions LLC, USA \thanksref{m3}}

\runauthor{Mahani and Sharabiani}
\end{aug}

\begin{abstract}
We present MH-MGT, a multivariate technique for sampling from twice-differentiable, log-concave probability density functions. MH-MGT is Metropolis-Hastings sampling using asymmetric, multivariate Gaussian proposal functions constructed from Taylor-series expansion of the log-density function. The mean of the Gaussian proposal function represents the full Newton step, and thus MH-MGT is the stochastic counterpart to Newton optimization. Convergence analysis shows that MH-MGT is well suited for sampling from computationally-expensive log-densities with contributions from many independent observations. We apply the technique to Gibbs sampling analysis of a Hierarchical Bayesian marketing effectiveness model built for a large US foodservice distributor. Compared to univariate slice sampling, MH-MGT shows 6x improvement in sampling efficiency, measured in terms of `function evaluation equivalents per independent sample'. To facilitate wide applicability of MH-MGT to statistical models, we prove that log-concavity of a twice-differentiable distribution is invariant with respect to 'linear-projection' transformations including, but not restricted to, generalized linear models.
\end{abstract}

\begin{keyword}[class=MSC]
\kwd{65C05, 65C60}
\end{keyword}

\begin{keyword}
\kwd{Markov chain Monte Carlo}
\kwd{Metropolis-Hastings algorithm}
\kwd{Newton optimization}
\kwd{Hierarchical Bayesian models}
\end{keyword}

\end{frontmatter}


\section{Introduction}
Univariate samplers can be applied to multivariate distributions using the Gibbs sampling framework. In high-dimensional state spaces, univariate samplers become inefficient as they require many function evaluations for each Gibbs cycle. The problem is more pronounced when the sampled distribution has a significant correlation structure. Multivariate samplers such as the shrinking rank slice sampler \cite{bib:mt-shrink-rank-slicer} or adaptive Metropolis-Hastings sampler \cite{bib:rr-adaptive-mh} can be more efficient in such circumstances, but at the expense of greater need for tuning. Finding optimal parameters for multivariate samplers can create significant manual work, which is perhaps why Bayesian inference software such as OpenBUGS\footnote{\url{http://www.openbugs.info/w/}} or JAGS\footnote{\url{http://mcmc-jags.sourceforge.net/}} primarily use univariate techniques such as slice sampler with stepout \cite{bib:neal} or adaptive rejection sampler \cite{bib:gilks-wild} for non-standard distributions. In this paper, we seek to develop a new multivariate sampling technique that is both efficient (i.e. requires few \underline{F}unction \underline{E}valuations \underline{E}quivalents (or FEE's) per independent sample) and robust (i.e. requires little tuning).

Metropolis-Hastings using Multivariate Gaussian Tangents, MH-MGT, is a technique for sampling from twice-differentiable, log-concave probability density functions. It uses asymmetric, multivariate Gaussians constructed from Taylor-series expansion of the log-density as proposal function. The mean of the Gaussian proposal function represents the full Newton step, and thus MH-MGT is the stochastic counterpart to Newton optimization. MH-MGT involves only 2 function/gradient/Hessian evaluations per sample, and its non-local jumps can lead to low autocorrelation. The algorithm, and its convergence and mixing properties, is described in detail in Section \ref{sec:mh-mgt}.

In Section \ref{sec:case} we apply MH-MGT to a Hierarchical Bayesian logistic regression problem, used in a marketing effectiveness study for a large US foodservice distributor. We observe that, compared to univariate slice sampler, MH-MGT is $\sim$6x more `efficient'. Sampler efficiency is formally defined as the number of function evaluation equivalents per independent sample, factoring in both the computational burden of samples and their autocorrelation \cite{bib:mt-thesis}. In order to achieve better convergence and mixing for MH-MGT, we make two adjustments: We use the first half of burn-in iterations to run MH-MGT in non-stochastic mode, i.e. accepting full Newton step rather than drawing from the proposal function and applying the MH rejection test. We also partition the 50-dimensional state space for low-level coefficients into 10 groups of 5 and apply Gibbs sampling to the resulting partitions. This not only counters the `curse of dimensionality', but also optimizes computational burden per sample given the quadratic scaling of Hessian matrix calculation with its size. To perform a fair comparison of MH-MGT against several other sampling techniques, we carefully measure and report their sampling efficiency when applied to a logistic regression log-likelihood function.

Limitations of MH-MGT and future research are discussed in Section \ref{sec:discussion}, including ways to improve convergence and mixing of MH-MGT, potential for faster Hessian calculation including parallelization, handling boundary conditions, and extending MH-MGT beyond log-concave densities.

Proving negative definiteness of the Hessian (prerequisite for current version of MH-MGT) can be challenging in a high-dimensional state space. In Appendix \ref{sec:theorem} we prove that a log-density with a negative-definite Hessian retains this property if one or more of its parameters undergo a linear-projection transformation. Such transformations are very common in statistical modeling (including generalized linear regression models) and therefore this theorem reduces the problem of proving negative-definiteness of Hessian to a much smaller dimensionality. Given that many common distributions are log-concave \cite{bib:gilks-wild}, MH-MGT can be applied to a wide array of problems in statistical modeling.

\section{MH-MGT Sampling} \label{sec:mh-mgt}
We set the stage by providing a brief overview of Metropolis-Hastings sampling algorithm.
\subsection{Overview of Metropolis-Hastings MCMC}
In Metropolis-Hastings (MH) MCMC sampling of probability distribution $p(\zz)$ (\cite{bib:mh}), we use a proposal function $q(\zz|\zz^{\tau})$ to generate a new sample $\zz^*$ and accept it with probability $A_k(\zz^*,\zz^{\tau})$ where 
\begin{equation}\label{eq:mh-A}
A(\zz^*,\zz^{\tau}) = \mathrm{min}(1,\frac{p(\zz^*)q(\zz^{\tau}|\zz^*)}{p(\zz^{\tau})q(\zz^*|\zz^{\tau})})
\end{equation}
The transition probability $q(\zz|\zz')A(\zz',\zz)$ satisfies detailed balance:
\begin{equation}
\begin{array}{lcl}
p(\zz)q(\zz|\zz')A(\zz',\zz) &=& \mathrm{min}(p(\zz)q(\zz|\zz'),p(\zz')q(\zz'|\zz)) \\
&=& \mathrm{min}(p(\zz')q(\zz'|\zz),p(\zz)q(\zz|\zz')) \\
&=& p(\zz')q(\zz'|\zz)A(\zz,\zz')
\end{array}
\end{equation}
The detailed balance property ensures that $p(\zz)$ is invariant under MH transitions.
\subsection{MGT Proposal Function}
MH-MGT algorithm uses as proposal function a multivariate Gaussian fitted locally to the distribution being sampled. This Gaussian fit is based on the following Taylor's series expansion of a multivariate scalar function:
\begin{equation}
\label{taylor}
f(\xx) \approx f(\xx_0) + \g(\xx_0)^T . (\xx-\xx_0) + \frac{1}{2} (\xx-\xx_0)^T \HH(\xx_0) (\xx-\xx_0)
\end{equation}
where $f:\R^K \rightarrow \R$, and $\g$ and $\HH$ stand for the gradient vector and Hessian matrix for $f$, respectively. If we assume that $f$ represents the logarithm of a concave probability distribution function (PDF), then the above approximation is equivalent to fitting the PDF (which we call $F$) with a multivariate Gaussian:
\begin{equation}
\label{gauss}
F(\xx) = \frac{1}{(2\pi)^{K/2}|\SSigma|^{1/2}} e^{-\frac{1}{2}(\xx-\mmu)^T \SSigma^{-1}(\xx-\mmu)}
\end{equation}
From comparing \eqref{taylor} and \eqref{gauss} it is obvious that the precision matrix is the same as the negative Hessian: $\SSigma^{-1}=-\HH(\xx_0)$. To find the mean of the fitted Gaussian, we observe that Gaussian mean maximizes the PDF (and its log). Therefore, finding the mean is equivalent to maximizing \eqref{taylor}, i.e. setting its derivative with respect to $\bbeta$ to zero. After some calculus, we arrive at:
\begin{equation} \label{eq:newton-step}
\mmu = \xx_0 - \HH^{-1}(\xx_0) \g(\xx_0)
\end{equation}
Therefore, our proposal function $q(.|\xx)$ is formally defined as:
\begin{equation} \label{eq:proposal}
q(.|\xx) = \N(\xx - \HH^{-1}(\xx) \g(\xx),-\HH^{-1}(\xx))
\end{equation}
Note that Eq. \eqref{eq:newton-step} is simply the full Newton step \cite{bib:nocedal}. We can therefore think of MH-MGT as the stochastic counterpart of Newton optimization. In optimization, we select the mean of the fitted Gaussian as the next step, while in MH-MGT we draw a sample from the fitted Gaussian and apply MH test to accept or reject it. Also, note that in the special case where the sampled PDF is Gaussian, $f(\xx)$ is quadratic and therefore the proposal function is identical to the sampled PDF. In this case $A(\zz',\zz)$ is always equal to $1$, implying an acceptance rate of $100\%$. Finally, note that for the fitted Gaussian to exist its covariance matrix must be positive definite. From \eqref{eq:proposal} we see that this is equivalent to the Hessian of the log-density (and its inverse) to be negative definite.
\subsection{Full MH-MGT Algorithm}
Combining the last two sections, the following steps describe the MH-MGT algorithm for drawing a sample $\xx_{new}$ from a PDF with log-density $f(\xx)$, given last sample $\xx_{old}$:
\begin{enumerate}
\item
Evaluate the log-density function and its gradient and Hessian at $\xx_\text{old}$: $f_\text{old},\g_\text{old},\HH_\text{old}$.
\item
Construct the multivariate Gaussian proposal function at $q(.|\xx_{old})$ using Eq. \eqref{eq:proposal} and $\xx=\xx_{old}$.
\item
Draw a sample $\xx_{prop}$ from $q(.|\xx_{old})$, and evaluate $logq_{prop}=\log(q(\xx_{prop}|\xx_{old}))$.
\item
Evaluate the log-density function and its gradient and Hessian at $\xx_{prop}$: $f_{prop},\g_{prop},\HH_{prop}$.
\item
Construct the multivariate Gaussian proposal function at $q(.|\xx_{prop})$ using Eq. \eqref{eq:proposal} and $\xx=\xx_{prop}$, and evaluate $logq_{old}=\log(q(\xx_{old}|\xx_{prop}))$.
\item
Calculate the ratio $r=\exp((f_{prop}-f_{old})+(logq_{old}-logq_{prop}))$.
\item
If $r \geq 1$ accept $\xx_{prop}$: $\xx_{new} \leftarrow \xx_{prop}$. Else, draw a uniform random deviate $s$ from $[0,1)$. If $s<r$, then accept $\xx_{prop}$: $\xx_{new} \leftarrow \xx_{prop}$, else reject $\xx_{prop}$: $\xx_{new} \leftarrow \xx_{old}$.
\end{enumerate}
Figure \ref{fig:mh-mgt} illustrates the MH-MGT algorithm graphically.

\begin{figure}
\vspace{6pc}
\includegraphics[scale=0.75]{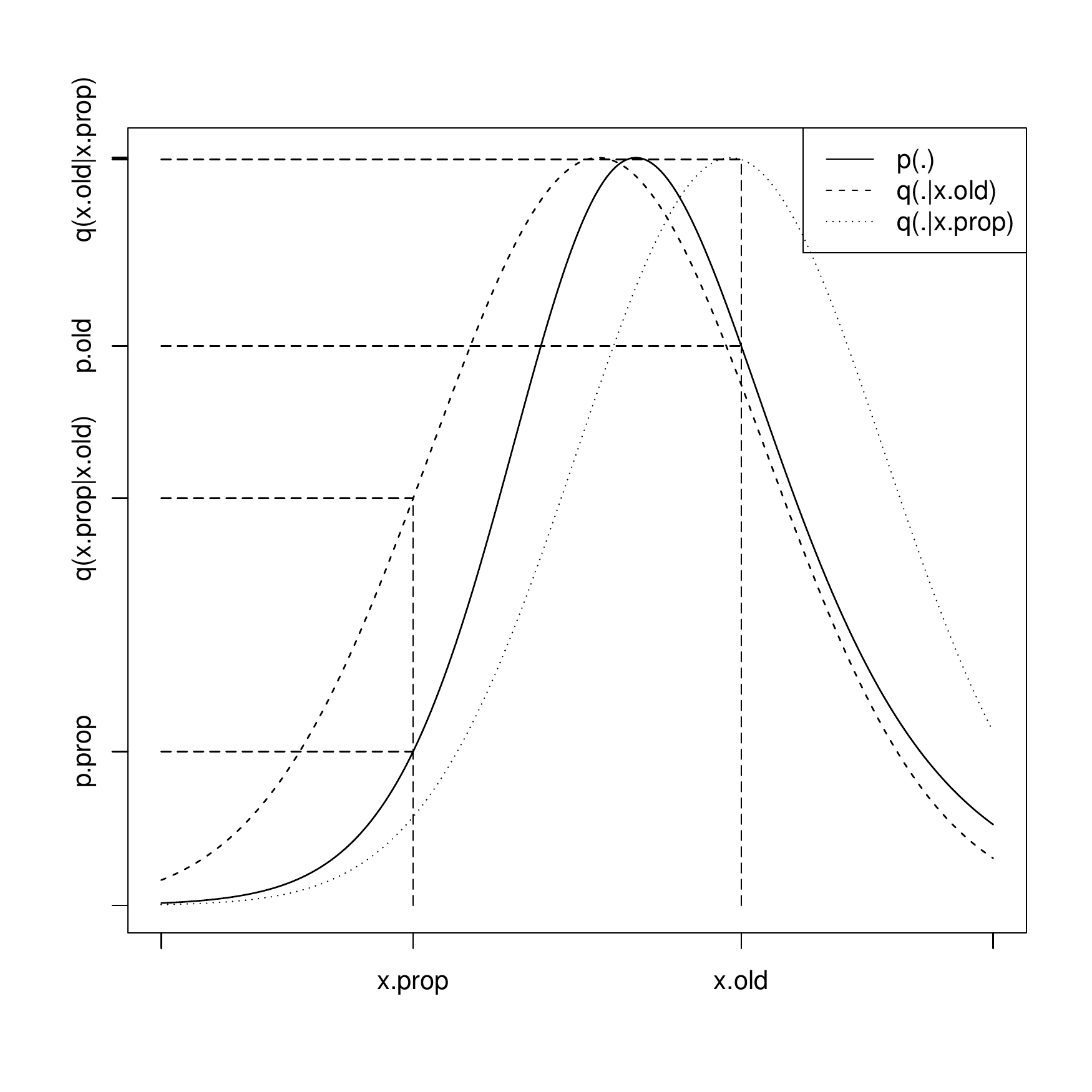}
\caption[]{Illustration of MH-MGT algorithm: The Gaussian proposal function $q(.|\xx_{old})$ is fit to $p(.)$ at $\xx_{old}$. A sample $\xx_{prop}$ is drawn from this proosal function, and a Gaussian $q(.|\xx_{prop})$ is fit to $p(.)$ at $\xx_{prop}$. This is followed by the MH rejection test by forming the ratio $(p(\xx_{prop})q(\xx_{old}|\xx_{prop}))/(p(\xx_{old})q(\xx_{prop}|\xx_{old}))$}.
\label{fig:mh-mgt}
\end{figure}

\subsection{Convergence and Mixing}
For Gaussian distributions, MH-MGT leads to 100\% acceptance and exact sampling since the proposal functions remain the same, and identical to the sampled distribution. In general, log-density has a non-zero third derivative, meaning that the fitted Gaussians at different locations deviate from the actual distribution and from each other, leading to non-zero rejection rate and auto-correlated MCMC chains. Figure \ref{fig:bern-converge} illustrates the behavior of MH-MGT for sampling from likelihood function for the parameter of a Poisson distribution with a single observation $\{2\}$. As the initial point is moved father away from the distribution mode, it takes longer for the chain to converge. Even after convergence, we encounter occasional long jumps, each followed by an extended rejection period. A simple yet effective way to facilitate convergence is to perform the first few iterations in a non-stochastic mode where, instead of MH sampling, we simply take Newton steps. This allows MH-MGT have the same convergence behavior as Newton optimization. (See Figure \ref{fig:bern-converge}.)

\begin{figure}
\vspace{6pc}
\includegraphics[scale=0.75]{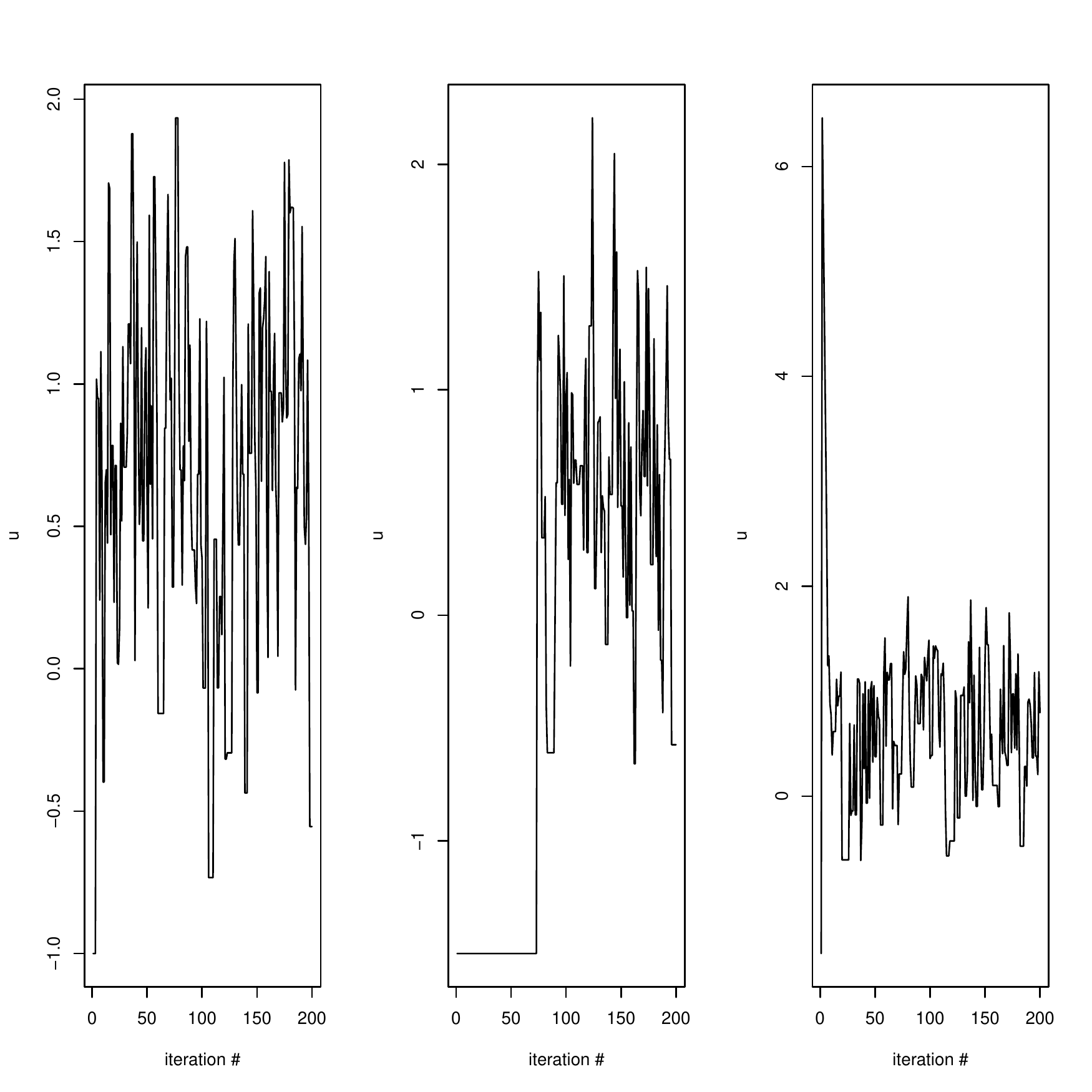}
\caption[]{MH-MGT convergence and mixing for various initial points for Poisson likelihood function (as a function of distribution parameter, given a single observation $\{2\}$). Left: Sample MCMC chain using MH-MGT, with starting point of $u=-1.0$. Middle: Increasing distance of initial point from distrubtion mode ($log(2) \simeq 0.69$) to $u=-1.5$ has a visible negative effect on convergence. Right: Keeping $u=-1.5$, if we use Newton step in the first 5 iteraions, convergence is improved significantly. In all cases, large deviations (towards a negative value) from distribution mode are often associated with an extended period of rejection before the chain returns to higher-density areas.}
\label{fig:bern-converge}
\end{figure}

Efficient mixing in MH-MGT depends on absence of large changes to the mean and std of the proposal function (relative to the std calculated at mode) as the chain moves within a few std's of the mode. To quantify this, we begin with the Taylor series expansion of log-density around its mode $\mu_0$ (where $f'(\mu_0)=0$), this time keeping the third-order term as well (and restricting our analysis to univariate case):
\begin{eqnarray}
f(x) &\approx& f(\mu_0) - \frac{1}{2}\tau_0(x-\mu_0)^2 + \frac{1}{6}\kappa_0(x-\mu_0)^3, \\
\tau_0 &\equiv& -f''(\mu_0) \\
\kappa_0 &\equiv& f'''(\mu_0)
\end{eqnarray}
Applying the Newton step to the above formula, we can arrive at mean $\mu(x)$ and precision $\tau(x)$ of the fitted Gaussian at $x$:
\begin{subequations}
\begin{eqnarray}
\mu(x) &=& x - \frac{f'(x)}{f''(x)} \\
&=& x - \frac{-\tau_0 (x-\mu_0) + \frac{1}{2} \kappa (x-\mu_0)^2}{-\tau_0 + \kappa_0 (x-\mu_0)} \\
\tau(x) &=& \tau_0 + \kappa_0 (x-\mu_0)
\end{eqnarray}
\end{subequations}
We now form the following dimensionless ratios, which we need to be much smaller than 1 in order to have good mixing for MH-MGT:
\begin{eqnarray}
|\mu(x)-\mu_0| \: . \: \tau_0 \ll 1 \\
|\tau(x)-\tau_0| / \tau_0 \ll 1
\end{eqnarray}
when $|x-\mu_0| \: . \: \tau_0 \sim O(1)$. Some algebra shows that both these conditions are equivalent to:
\begin{equation} \label{eq:mixing-condition}
\eta_0 \equiv |\kappa_0| \: . \: \tau_0^{-3/2} \ll 1
\end{equation}
In the example of Figure \ref{fig:bern-converge}, we can calculate $\eta_0 \simeq 0.71$, which suggests less-than-excellent mixing according to our rule-of-thumb.

The Poisson distribution example, however, is rather contrived: In real-world applications, we are interested in efficient sampling from computationally-expensive log-densities. In particular, statistical models often involve many independent observations, each contributing an additive term to the log-likelihood function:
\begin{equation}
f(x) = \sum_{i=1}^N f_i(x)
\end{equation}
We can therefore re-write equation \eqref{eq:mixing-condition} as:
\begin{equation}
\eta_0 = |\sum_{i=1}^N f'''_i(x)| \: . \: [\sum_{i=1}^N f''_i(x)]^{-3/2}
\end{equation}
Assuming that individual terms in above equation remain bounded, we can easily see that
\begin{equation}
\eta_0 \sim O(N^{-1/2})
\end{equation}
We therefore arrive at the following rule of thumb: {\em MH-MGT sampling becomes more efficient as the number of observations in a log-likelihood function increases}. Figure \ref{fig:mixing-N} illustrates the impact of $N$ on mixing of MH-MGT for the Poisson distribution example.
\begin{figure}
\vspace{6pc}
\includegraphics[scale=0.75]{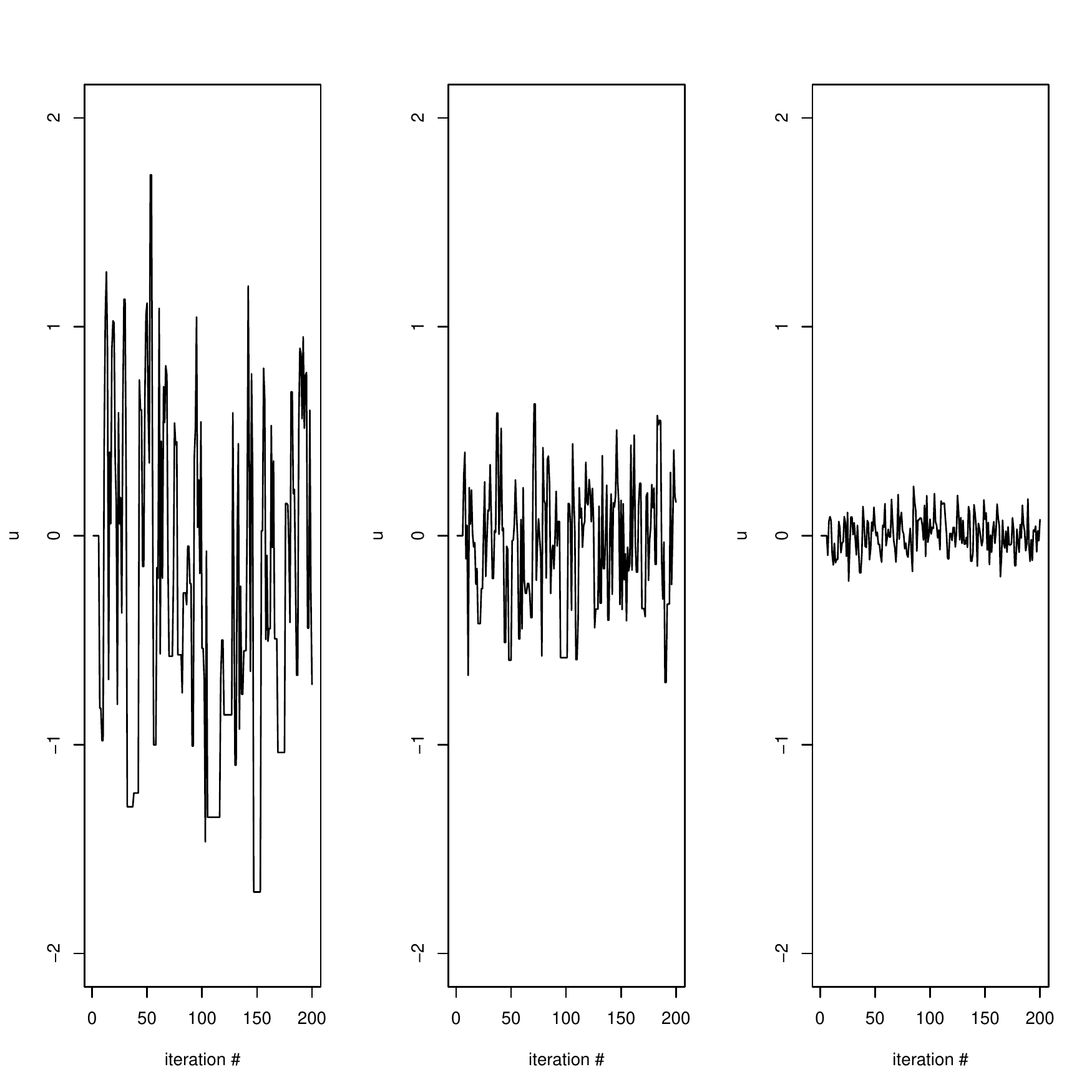}
\caption[]{Impact of number of observations on mixing for MH-MGT sampling of the likelihood function for (parameter of) Poisson distribution, with a single observation of $\{1\}$. Left: 1 observation. Middle: 10 observations. Right: 100 observations. With increased observations, }
\label{fig:mixing-N}
\end{figure}

\subsection{MH-MGT for High-Dimensional Problems}
In order to apply MH-MGT to twice-diffenrentiable high-dimensional probability density functions, we must first prove that Hessian is negative definite. To this end, we have proven an invariance theorem in Appendix \ref{sec:theorem} that significantly simplifies this task when the function results from linear-projection changes of variable applied to a base distribution. This theorem covers what is known as the generalized linear family of models \cite{bib:glm}, but is more general.

The primary motivation behind MH-MGT is to amortize the cost of function and derivative evaluations over multiple parameters, thereby increasing sampling efficiency. Sampling efficiency of MH-MGT does improve with dimensionality of parameter space, but up to an extent. There are two factors that act as countering forces to create an optimal dimensionality: 1) quadratic scaling of time needed for Hessian calculation with dimensionality, 2) curse of dimensionality, i.e. decreased convergence and mixing of MH-MGT chain as dimensionality increases. Our experiments show that MH-MGT tends to work best when we partition the parameter space into 5-10 dimensional chunks, and apply the algorithm to each chunk using Gibbs sampling. This strategy is used in Section \ref{sec:case}.

\section{Application to Quantitative Marketing} \label{sec:case}
We illustrate the performance of MH-MGT by applying it to a problem in quantitative marketing.
\subsection{Model}
A large US foodservice distributor makes promotional offers of select products to its customers (restaurant operators) across the US. The marketing team wants to understand the drivers of customer response (acceptance vs. rejection of offer) so that it can focus limited funds and customer attention on high-opportunity offers to maximize marketing ROI. In order to handle heterogeneity of different product categories and customer geographies, the team builds a Hierarchical Bayesian logistic regression model. The training data consists of ~261,628 observations (historical offers) distributed across 22 regression groups. We use 50 lower-level covariates and 3 upper-level covariates in the model (including intercepts). The model can be formally described as follows:
\begin{subequations}
\begin{equation}
y_i \sim \mathrm{dBern}(1/(1+\exp(-\xx_i^t \bbeta_{j[i]})))
\end{equation}
\begin{equation}
\bbeta_{j} \sim \mathrm{dNorm}(\zz_j \ggamma, \Sigma)
\end{equation}
\end{subequations}
where $i$ and $j$ are observation and group indexes, respectively and $\Sigma$ is assumed to be diagonal. There are also non-informative Gaussian and Gamma priors on elements of $\gamma$ and $\Sigma$, respectively.
\subsection{Gibbs Sampling using MH-MGT}
We have three groups of variables to sample from: $\bbeta_j$'s, $\ggamma$, and $\Sigma$. The last two groups have conjugate prior and likelihood functions and can therefore be sampled exactly. For $\bbeta_j$'s, the sampled distribution has two components, log-likelihood and the prior:
\begin{eqnarray}
\log P(\bbeta_j|-) = &&\underbrace{-\sum_{i \in S_j} \left\{ (1-y_i) \: \xx_i \bbeta_j + \log[1+\exp(-\xx_i \bbeta_j)] \right\}}_\text{likelihood} \\ &&\underbrace{-\frac{1}{2} (\bbeta_j - \zz_j \ggamma)^T \Sigma^{-1} (\bbeta_j - \zz_j \ggamma)}_\text{prior}
\end{eqnarray}
where $S_j$ is the set of observation indexes belonging to group $j$. Log-likelihood has a negative definite Hessian according to Appendix \ref{sec:theorem} (base distribution is Bernoulli, for which negative definiteness of Hessian is easy to prove), and the prior has a negative definite Hessian since it is a multivariate Gaussian. The sum of two negative definite matrices is another negative definite matrix. We can therefore use MH-MGT to sample from $\bbeta_j$'s. We choose to perform MH-MGT on blocks of 5 coefficients to improve mixing. For 50 low-level covariates, this implies Gibbs sampling of 10 blocks for each regression group.
\subsection{Results}
Table~\ref{tbl:case-result} shows the results for running 500 iterations of Gibbs sampling (plus 500 burn-in). For comparison, we show same output but using univariate slice sampler, the method of choice used by the modeling team prior to development of MH-MGT technqiue. first, notice the significant heterogeneity across coefficients of different groups. This, in conjunction with the fact that some groups have very few observations (e.g. 55 for one group), justifies the use of an HB framework for partial pooling of the coefficients. We see that the two methods have produced similar coefficient values, but MH-MGT is about 4.1/0.7=5.8x more efficient than slice sampler. (Both methods are implemented in R, and the function and derivative evaluation routines are very similar for the two sampling methods.)

\begin{table*}
\caption{Comparison of MH-MGT and slice sampler results for HB logistic regression case study.}
\label{tbl:case-result}
\begin{tabular}{ccc}
\hline
 & MH-MGT & Slice Sampler \\
\hline
$\beta_{1,1}$ & $1.62\pm0.089$ & $1.65\pm0.13$ \\
\hline
$\beta_{2,1}$ & $-1.43\pm0.43$ & $-1.65\pm0.52$ \\
\hline
$\beta_{3,1}$ & $-1.03\pm0.17$ & $-1.19\pm0.089$ \\
\hline
$\beta_{4,1}$ & $1.27\pm0.17$ & $1.54\pm0.18$ \\
\hline
$\beta_{5,1}$ & $-0.99\pm0.25$ & $-0.89\pm0.14$ \\
\hline
Time for $\bbeta$'s \\ (min) & 110.3 & 750.0 \\
\hline
Average effective size & 155 & 181 \\
\hline
Time per independent sample \\ (min) & 0.7 & 4.1 \\
\hline
\end{tabular}
\end{table*}

\subsection{Sampling Efficiency}
Efficiency of a MCMC sampling technique depends on two factors: 1) calculation needed per sample, and 2) sample autocorrelation. To measure (1), we define `function evaluation equivalents per sample', a number that captures total computation - measured in units of function evluation - to generate each sample, and includes calculation of function, gradient and Hessian, as well as overhead such as solving Newton's equation and performing the MH rejection test. To measure (2), we apply the function \texttt{effectiveSize()} from \texttt{R} package \texttt{coda} and divide it by the number of samples (excluding burn-in). Table \ref{tbl:se-binlogit} summarizes the sampling efficiency analysis of MH-MGT, univariate slice sampler, adaptive rejection sampling (ARS) \cite{bib:gilks-wild} and shrinking-rank slice sampler \cite{bib:mt-shrink-rank-slicer} for binary logistic regression likelihood. Similar to our case study, we see that MH-MGT is $\sim$7x more efficient than univariate slicer and $\sim$3x more efficient than shrinking-rank slicer. Also, contrary to shrinking-rank slicer where the tuning parameter can have significantly different values for different models, MH-MGT requires almost no tuning. (Choosing a chunk size of 5 for MH-MGT appears to be safe across several models that we have tested.)

\begin{table*}
\caption{Sampling efficiency of MH-MGT, univariate slice sampler, ARS, and Shrinking-Rank slice sampler for binary logistic regression. Number of observations: $1000$; number of covariates: $10$. Data is simulated. $100$ runs are used to generate the averages reported. For all reference samplers, we tuned the parameter(s) until no significant improvement in sampling efficienct was seen. Source codes are: univariate slicer (\url{http://www.cs.toronto.edu/~radford/ftp/slice-R-prog}); ARS: \url{http://cran.r-project.org/web/packages/ars/ars.pdf}; shrinking-rank-slicer: \url{http://cran.r-project.org/web/packages/SamplerCompare/index.html}}
\label{tbl:se-binlogit}
\begin{tabular}{lcccc}
\hline \\
 & univariate slicer & ARS & shrinking-rank slicer & MH-MGT\\
\hline
FEE per Nominal Sample & 54.8 & 75.2 & 5.3 & 6.3 \\
\hline
Effective Sampling Rate & 0.83 & 1.10 & 0.21 & 0.70 \\
\hline
FEE per Effective Sample & 69.1 & 72.0 & 26.9 & 9.7 \\
\hline
\end{tabular}
\end{table*}

\section{Discussion} \label{sec:discussion}
\subsection{Improving Convergence and Mixing}
MH-MGT depends on Newton optimization to achieve convergence during burn-in phase. In our case study, this was sufficient. However, Newton optimization can have pathological behavior for some distributions when initial point is too far from the distribution mode. For such cases, we can use modified versions of Newton method, e.g. with line search \cite{bib:nocedal}. In fact, line search can also be incorporated into the proposal function to allow MH-MGT to have better mixing, e.g. for smaller data sizes. Including line search, however, adds to the computational cost of the algorithm, and must therefore be considered carefully before using.
\subsection{Faster Calculation of Hessian}
One barrier for using bigger block partitions in MH-MGT is the quadratic scaling of Hessian calculation with state space dimensionality. Hessian calculation, however, naturally lends itself to parallelization. With parallelization, Hessian calculation time can be reduced significantly, allowing larger partition sizes to become computationally feasible. It is possible that a combination of line search techniques for proposal function, and parallel Hessian calculation can lead to higher sampling efficiency for MH-MGT for high-dimensional problems.
\subsection{Boundary Conditions}
In our example, all coefficients were unconstrained. Including boundary conditions in MH-MGT is straightforward, in principle. Such conditions can be enforced by rejecting out-of-bound draws from the proposal function. However, if the mean of the fitted Gaussian lies close to, or inside, the forbidden region, acceptance rate can be significantly reduced. It remains to be seen whether we need more sophisticated approaches to enforcing boundary conditions for MH-MGT.
\subsection{Beyond Log-concave Distrbutions}
MH-MGT currently works only for distributions where Hessian exists and is negative definite. Depsite ubiquity of such distributions in practice. generalizing MH-MGT beyond current restrictions might be worthwhile. One possible avenue to explore is to use a method such as slice sampler as a backup when calculated Hessian at a point is not negative definite.
\subsection{Conclusions}
The proposed MH-MGT sampling algorithm not only offers potential for significantly faster MCMC, especially for large datasets, it also creates a conceptual bridge to the world of optimization. We hope that future research will further connect these two lines of research, leading to cross-pollination and ultimately more effective sampling as well as optimization techniques.

\appendix

\section{Log-concavity Invariance Theorem} \label{sec:theorem}
\begin{thm}
If $N$ functions $f^i(u_1,...,u_J) : \R^J \rightarrow \R, \: i=1,...,N$ have negative definite Hessians, then the function $g: \R^{\sum_j K_j} \rightarrow \R$ defined as:
\begin{equation} \label{eq:main-theorem}
g(\bbeta^1,...,\bbeta^J) \equiv \sum_{i=1}^N g^i(\bbeta^1,...,\bbeta^J) \equiv \sum_{i=1}^N f^i(\langle \xx_1^i,\bbeta_1 \rangle, ..., \langle \xx_J^i,\bbeta_J \rangle)
\end{equation}
also has a negative definite Hessian IF at least one of $J$ matrices $\XX_1,...,\XX_J$ is full rank, where $\XX_j[N \times K_j] \equiv \begin{bmatrix} \xx_j^1 & ... & \xx_j^N \end{bmatrix}^T$ and $\xx_j^i,\bbeta_j \in \R^{K_j}$.
\end{thm}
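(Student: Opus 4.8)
The plan is to differentiate $g$ twice, observe that its Hessian is a sum of negative semidefinite pieces --- one per summand $f^i$ --- and then use the full-rank hypothesis to promote negative semidefiniteness to negative definiteness. Each summand is a composition with a \emph{fixed} linear map: introduce $D^i:\R^{\sum_j K_j}\to\R^J$ by $(D^i\mathbf{v})_j \equiv \langle\xx_j^i,\mathbf{v}_j\rangle$ for $\mathbf{v}=(\mathbf{v}_1,\dots,\mathbf{v}_J)$, so that $g^i(\bbeta) = f^i(D^i\bbeta)$. Since $D^i$ does not depend on $\bbeta$, the chain rule gives $\nabla^2 g^i(\bbeta) = (D^i)^T\,\nabla^2 f^i(D^i\bbeta)\,D^i$, and summing over $i$,
\begin{equation}
\nabla^2 g(\bbeta) \;=\; \sum_{i=1}^N (D^i)^T\,\nabla^2 f^i(D^i\bbeta)\,D^i .
\end{equation}
Equivalently, the $(l,l')$ block of $\nabla^2 g$ is $\sum_i \big[\nabla^2 f^i\big]_{ll'}\,\xx_l^i(\xx_{l'}^i)^T$.

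Next I would contract this identity against an arbitrary direction $\mathbf{v}=(\mathbf{v}_1,\dots,\mathbf{v}_J)$:
\begin{equation}
\mathbf{v}^T\,\nabla^2 g(\bbeta)\,\mathbf{v} \;=\; \sum_{i=1}^N (D^i\mathbf{v})^T\,\nabla^2 f^i(D^i\bbeta)\,(D^i\mathbf{v}).
\end{equation}
Because each $\nabla^2 f^i$ is negative definite, every term on the right is $\le 0$, so $\nabla^2 g$ is at least negative semidefinite; moreover a sum of nonpositive terms vanishes only if each vanishes, and strict definiteness of $\nabla^2 f^i$ turns $(D^i\mathbf{v})^T\nabla^2 f^i(D^i\mathbf{v})=0$ into $D^i\mathbf{v}=0$. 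Hence $\mathbf{v}^T\nabla^2 g\,\mathbf{v}=0$ forces $D^i\mathbf{v}=0$ for all $i$, i.e. $\langle\xx_j^i,\mathbf{v}_j\rangle=0$ for all $i,j$; collected over $i=1,\dots,N$ this is precisely the linear system $\XX_j\mathbf{v}_j=0$ for each $j$.

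It then remains to exclude $\mathbf{v}\ne 0$. If $\XX_j$ has full (column) rank, then $\XX_j\mathbf{v}_j=0$ forces $\mathbf{v}_j=0$; in the generalized-linear situation $J=1$ this one hypothesis already yields $\mathbf{v}=0$, so the only null direction of $\nabla^2 g$ is trivial and $\nabla^2 g\prec 0$ (and in general the rank condition on $\XX_j$ annihilates the $j$-th block of any null direction). I expect the chain-rule computation to be entirely routine; the one step that needs care is the equality analysis --- passing from ``the quadratic form is zero'' to ``each $D^i\mathbf{v}=0$'' to ``$\mathbf{v}=0$'' --- where both the strictness of the definiteness of the $\nabla^2 f^i$ and the rank of the design matrix are used in an essential way. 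A cleaner, essentially equivalent alternative worth noting is to avoid Hessians: $g$ is a finite sum of concave functions (each $f^i$ is concave, being precomposed with the affine map $D^i$), hence concave, and strict concavity fails only along a direction $\mathbf{v}$ with $\XX_j\mathbf{v}_j=0$ for all $j$, which the rank hypothesis forbids.
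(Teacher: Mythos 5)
Your route is the same as the paper's: the chain rule gives the block Hessian $\HH_{jj'}=\sum_i f^i_{u_ju_{j'}}\,\xx_j^i(\xx_{j'}^i)^T$, contracting against a direction $\mathbf{v}=(\mathbf{v}_1,\dots,\mathbf{v}_J)$ rewrites the quadratic form as $\sum_i \qq_i^T\HH_i\qq_i$ with $\qq_i=(\langle\xx_1^i,\mathbf{v}_1\rangle,\dots,\langle\xx_J^i,\mathbf{v}_J\rangle)$, strict negative definiteness of each $\HH_i$ forces every $\qq_i$ to vanish on any null direction, and collecting over $i$ gives $\XX_j\mathbf{v}_j=0$ for all $j$. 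Up to notation (your $D^i$ versus the explicit blocks) the two arguments are identical through this point.

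The step at which you stop short is exactly where the paper's proof overreaches, and your caution is warranted. From $\XX_j\mathbf{v}_j=0$ for all $j$ with $\mathbf{v}\neq 0$ one may only conclude that those $\XX_j$ whose block $\mathbf{v}_j$ is nonzero have nontrivial null spaces; the paper instead asserts that \emph{all} $\XX_j$ fail to be full rank, and thereby contradicts the hypothesis that at least one is full rank. As your parenthetical notes, a single full-rank $\XX_j$ only annihilates the $j$-th block of a null direction, and for $J>1$ the theorem as stated is actually false: take $J=2$, any negative-definite $f^i$, $\XX_1$ of full column rank and $\xx_2^i=0$ for every $i$; then $g$ does not depend on $\bbeta_2$, so its Hessian is singular even though one design matrix is full rank. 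The statement (and both proofs) are repaired by requiring \emph{every} $\XX_j$ to have full column rank, under which your equality analysis closes immediately; in the generalized-linear case $J=1$ the two hypotheses coincide, which is the case your remark already covers. Your closing concavity observation is fine but buys only concavity; strictness still requires the same equality analysis, so it is not a genuine shortcut.
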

\begin{proof}
Applying the chain rule to \eqref{eq:main-theorem}, we can express the Hessian for $g$ in terms of $J^2$ blocks, where block $\HH_{jj'}$ has dimensions $K_j \times K_{j'}$:
\begin{equation} \label{eq:hij}
\HH_{jj'} = \sum_{i=1}^N f_{u_ju_{j'}}^i \: . \: (\xx_j^i \otimes \xx_{j'}^i)
\end{equation}
To prove negative definiteness of $\HH$, we form $ \pp^T \HH \pp$ while decomposing $\pp$ into $J$ subvectors of length $K_j$ each:
\begin{equation} \label{eq:pp}
\pp = \begin{bmatrix} \pp_1^T & \pp_2^T & \cdots & \pp_J^T \end{bmatrix}^T
\end{equation}
We now combine Eqs. \eqref{eq:hij} and \eqref{eq:pp}:
\begin{eqnarray}
\pp^T \HH \pp &=& \sum_{j,j'} \pp_j^T \HH_{jj'} \pp_{j'} \\
&=& \sum_{j,j'} \pp_j^T \left( \sum_i f_{u_ju_{j'}}^i \: . \: (\xx_j^i \otimes \xx_{j'}^i) \right) \pp_{j'} \\
&=& \sum_i \sum_{j,j'} f_{u_ju_{j'}}^i \: \pp_j^T \: (\xx_j^i \otimes \xx_{j'}^i) \: \pp_{j'}
\end{eqnarray}
If we define a set of new vectors $\qq_i$ as:
\begin{eqnarray}
\qq_i \equiv \begin{bmatrix} \pp_1^T \xx_1^i & \cdots & \pp_J^T \xx_J^i \end{bmatrix}
\end{eqnarray}
and use $\HH_i$ to denote the Hessian of $f_i$, we can write:
\begin{equation}
\pp^T \HH \pp = \sum_i \qq_i^T \, \HH_i \, \qq_i
\end{equation}
Since all $\HH_i$'s are negative definite, all $\qq_i^T \, \HH_i \, \qq_i$ terms must be non-positive. Therefore, $\pp^T \HH \pp$ can be non-negative only if all its terms are zero, which is possible only if all $\qq_i$'s are null vectors. This, in turn, means we must have $\pp_j^T \xx_j^i = 0,\:\: \forall \, i,j$. In other words, we must have $\XX_j \pp_j = \emptyset,\:\: \forall \, j$. This means that all $\XX_j$'s have non-singleton nullspaces and therefore cannot be full-rank, which contradicts our assumption. Therefore, $\pp^T \HH \pp$ must be negative.
\end{proof}
Note that we can easily extend the above theorem to include linear basis function models \cite{bib:bishop}.

This theorem provides an absract basis for proving concavity of log-likelihood for many statistical models, including generalized linear family, proportional hazard survival models and multinomial logit regression. However, the framework is more general and can be used to `invent' new regression problems. For example, in Weibull survival analysis we can assume the rate parameter to also be related to a linear function of some covariates (perhaps via a link function), in addition to the hazard function. Since the base, 2-parameter distribution has a negative Hessian, same goes for the expanded distribution, as per the above theorem.

\end{document}